\def\MINE{0}

\if\MINE1
    \documentclass[11]{article}
    \usepackage{amsthm}
\else
    \documentclass[runningheads]{llncs}
    \usepackage[T1]{fontenc}
\fi

\usepackage{marginnote} %

\usepackage{ifthen}
\usepackage{latexsym,amssymb,amsmath,mathdots}
\usepackage{graphics}
\usepackage[dvipsnames]{xcolor}
\usepackage{graphicx}
\usepackage{tcolorbox}
\usepackage[colorlinks,citecolor=blue]{hyperref}
\usepackage{bm}
\usepackage{shuffle}   %
\usepackage{xspace}
\usepackage{tikz}     
\usepackage{tkz-graph}  
\usetikzlibrary{%
  arrows,%
  positioning,%
  decorations.pathmorphing,%
  decorations.pathreplacing,%
  automata,%
  shapes.geometric,%
}

\if\MINE1
\newtheorem{theorem}{Theorem}[section]

\newtheorem{definition}[theorem]{Definition}

\theoremstyle{definition}

\newtheorem{example}[theorem]{Example}

\fi

\newcommand\pnsi{\par\indent}
\newcommand\pnsn{\par\noindent}
\newcommand\pssi{\par\smallskip\indent}
\newcommand\pssn{\par\smallskip\noindent}

\newcommand\pmsn{\par\medskip\noindent}

\newcommand\pbsn{\par\bigskip\noindent}

\newcommand\emdef[1]{\textsf{#1}}
\newcommand\emshort[1]{\emph{#1}}

\newenvironment{fsm}[1][]
	{\begin{center}
	\begin{tikzpicture}[font=\footnotesize,shorten >=1pt,node distance=3.5cm,
	on grid,>=stealth',initial text=,
	every node/.style={align=center},
	every state/.append style={inner sep=1pt},
	every path/.style={->,bend angle=20},
	#1]}
	{\end{tikzpicture}\end{center}}

\newcommand{\N}{\ensuremath{\mathbb{N}}\xspace}

\newcommand{\al}[0]{\ensuremath{\Sigma}\xspace} 
\newcommand{\alstar}[0]{\ensuremath{\al^*}\xspace} 
 
\newcommand{\alG}[0]{\ensuremath{\Gamma}\xspace}
\newcommand{\ew}[0]{\ensuremath{\lambda}\xspace}   
 
\newcommand\rel{\ensuremath{\omega}\xspace}   
\newcommand\po{\ensuremath{\le}\xspace}   
\newcommand\ppx{\ensuremath{\po_{\mathrm{px}}}\xspace}   
\newcommand\psx{\ensuremath{\po_{\mathrm{sx}}}\xspace}   %

\newcommand\spo{\ensuremath{<}\xspace}   
\newcommand\spx{\ensuremath{\spo_{\mathrm{px}}}\xspace}   
\newcommand\ssx{\ensuremath{\spo_{\mathrm{sx}}}\xspace}   %

\newcommand{\rea}{\ensuremath{\alpha}\xspace}  

\newcommand\auta{\ensuremath{\bm{a}}\xspace}   
\newcommand{\tr}{\ensuremath{\bm{t}}\xspace}       
\newcommand{\trt}{\ensuremath{\bm{t}}\xspace}    
\newcommand\trs{\ensuremath{\bm{s}}\xspace}      
\newcommand\tsx{\ensuremath{\bm t_{\rm sx}}\xspace}   
\newcommand\thc{\ensuremath{\bm t_{\rm hc}}\xspace}   
\newcommand\ted{\ensuremath{\bm t_{\text{1-ed}}}\xspace}   

\newcommand\indepfont{\mathbb}

\newcommand\descrf{\ensuremath{\varphi}\xspace}  
\newcommand\indepf{\ensuremath{\indepfont{P}_{\descrf}}\xspace}
\newcommand{\indep}[1]{\ensuremath{\indepfont{P}_{#1}}\xspace}

\newcommand{\indephyph}[2]{\ensuremath{\indepfont{P}_{#1\text{-}#2}}\xspace}

\newcommand\indepP{\ensuremath{\indepfont{P}}\xspace}

\begin{document}

\newcommand{\thetitle}{Some Remarks on Marginal Code Languages}

\newcommand{\theabstract}{A prefix code L satisfies the condition that no word of L is a proper prefix of another word of L. Recently, Ko, Han and Salomaa relaxed this condition by allowing a word of L to be a proper prefix of at most k words of L, for some `margin' k, introducing thus the class of k-prefix-free languages, as well as the  similar classes of k-suffix-free and k-infix-free languages. 
Here we unify the definitions of these three classes of languages into one uniform definition in
 two ways: via the method of partial orders and via the method of transducers. 
 Thus, for any known class of code-related languages definable via the transducer method, one gets a marginal version of that class. 
 Building on the techniques of Ko, Han and Salomaa, we discuss the \emph{uniform} satisfaction and maximality problems for   marginal classes of languages.}

\if\MINE1
\begin{center}
	\textbf{\Large\thetitle}
\pbsn
{\large Stavros Konstantinidis}  
\pbsn
\parbox{0.90\textwidth}{Department of Mathematics and Computing Science, Saint Mary's University, Halifax, NS, Canada, \texttt{s.konstantinidis@smu.ca}}
\pbsn
\parbox{0.90\textwidth}{
\textbf{Abstract.}  \theabstract
}  
\end{center}
\textbf{Keywords.}  independent languages, codes, maximality, decidability, automata, formal languages
\else   
\title{\thetitle}
\titlerunning{Remarks on Marginal Code Languages}
\author{Stavros Konstantinidis\orcidID{0000-0002-6628-067X}}
\authorrunning{S. Konstantinidis}

\institute{
Mathematics and Computing Science,\\
Saint Mary's University, Halifax, NS, Canada,\\
\email{s.konstantinidis@smu.ca}\\
\url{https://cs.smu.ca/~stavros/}
}
\maketitle
\begin{abstract}
\theabstract
\keywords{independent languages \and codes \and maximality \and decidability \and automata \and formal languages}
\end{abstract}
\fi      

\section{Introduction}\label{sec:intro}
The concepts of $k$-prefix-free and finitely prefix-free languages defined in \cite{KoHanSalomaa:2021} extend naturally the concept of prefix code, and similarly the corresponding concepts for suffix and infix codes, \cite{KoHanSalomaa:2021}. 
These three concepts can be unified into a single uniform concept of a $k$-$\alpha$ and finitely-$\alpha$ language that extends naturally any existing concept of an $\alpha$ language which is definable in the method of partial orders \cite{Shyr:book}, or the method of transducers \cite{DudKon:2012,Kon:2017}.
The generic term for codes like prefix and suffix codes is \emdef{independent languages}.
Following the use of the term `margin' in \cite{KoHanSalomaa:2021} for the above parameter $k$, here we use the generic term \emdef{marginal independent languages} for the $k$-$\alpha$ and finitely-$\alpha$ versions of existing independent languages. 
Building on the techniques of \cite{KoHanSalomaa:2021}, we discuss the uniform satisfaction and maximality problems for  marginal classes of languages.\pssn
\textbf{Organization and main results.}
The paper is organized as follows. The next section contains a few basic concepts from automata, formal languages, and codes. 
In \underline{Section~\ref{sec:jurg}}, we review two \emph{methods} of defining classes of independent languages---these classes are called \emdef{independent properties:} the method of partial orders and the more  broad method of J\"urgensen independence. 
We discuss how to extend the method of partial orders to a method for `marginal' partial orders, in the spirit of \cite{KoHanSalomaa:2021}. 
In Theorem~\ref{th:alpha:margin}, we show the following mathematically interesting result: every language satisfying a finitely-$\alpha$ independent property definable via a right infinite partial order cannot be embedded into a maximal language and, therefore, these  finitely-$\alpha$ independent properties fall outside the method of J\"urgensen independence.
In \underline{Section~\ref{sec:dec}}, we review the \emdef{satisfaction problem}, that is, the problem of deciding whether a given regular language satisfies one of the three types of marginal properties in \cite{KoHanSalomaa:2021}. 
In \underline{Section~\ref{sec:formal}}, we review the `formal' method of transducers for defining independent properties, where the property is describable via a finite expression representing a transducer. 
This leads to uniform versions of the satisfaction and maximality decision problems, where the independent property is part of the input. 
In Theorems~\ref{th:gen:sat:margin} and~\ref{th:gen:sat:finite:margin}, we show how to decide the \emph{uniform} satisfaction problem for marginal properties.
In \underline{Section~\ref{sec:last}}, we close with a few concluding remarks and a few directions for future research.

\section{Basic Notions and Notation}\label{sec:notation}
We assume the reader to be familiar with basics of formal languages, see e.g., \cite{HopcroftUllman,MaSa:handbook,FLhandbookI,Salomaa:1973}. Some notation:
\ew denotes the empty word;
$\al,\alG$ denote arbitrary alphabets; 
$\bar L$ denotes the complement of the language $L$.
Let  $L$ be a language  and let $u,v,w,x$ be any words. If $w\in L$ then we say that $w$ is an \emdef{$L$-word}. We use standard operations and notation on words and languages; in particular $|w|$ denotes the length of $w$. 
If $w$ is of the form $uv$ then $u$ is called a \emdef{prefix} of $w$ and $v$ is called a \emdef{suffix} of $w$. If $w$ is of the form $uxv$ then $x$ is called an \emdef{infix} of $w$ and $uv$ is called an \emdef{outfix} of $w$.
A prefix $u$ of $w$ is called \emdef{proper}, if $u\not=w$. Similar are the concepts of proper suffix, proper infix and proper outfix.
\pnsi
We use the term \emdef{(finite) automaton} for the standard notion of a nondeterministic finite automaton (NFA), and the acronym DFA for deterministic finite automaton. 
We  assume the reader to be familiar with basics of  transducers, see e.g., \cite{Be:1979,Sak:2009,Yu:handbook}.
A \emdef{(finite) transducer} is a 6-tuple $\trt=(Q,\al,\alG,E,I,F)$ 
such that $Q$ is the set of states, $I\subseteq Q$ is the nonempty set of start (initial) states, $F\subseteq Q$ is the set of final states,  $\al,\alG$ are the  input and output alphabets, respectively, and $E$ is the finite set of transitions (edges). \emph{In this paper, all transducers have equal input and output alphabets: $\al=\alG$.}
We assume that all transducers are in \emdef{standard form:} in each transition $(p,x/y,q)\in E$, we have that the \emdef{input label} $x$ is either the empty word \ew or a symbol in \al, and the same for the \emdef{output label} $y$.
We view a transducer \trt as a labelled directed graph, so we can talk about a \emdef{path} in \trt.  
The \emdef{label} of a path is the pair $(u,v)$ of words such that $u$ (resp. $v$) is the concatenation of the input (resp. output) labels on the transitions of the path.
A path is \emdef{accepting} if the first state of the path is initial and the last state is final.
The relation $R(\trt)$ \emdef{realized by} \trt is the set of labels $(u,v)$ on the accepting paths of \trt.
The set of outputs of \trt on input $u$ is denoted by $\trt(u)$, that is, $v\in\trt(u)$ iff $(u,v)\in R(\trt)$.
The \emdef{inverse $\trt^{-1}$ of \trt} realizes the inverse of the relation $R(\trt)$, that is, $R(\trt^{-1})=\{(v,u)\mid (u,v)\in R(\trt)\}$.
If \trs is an automaton, or transducer, then $|\trs|$ denotes the \emdef{size of \trs} = the sum of the numbers of states and transitions of \trs.

We write $\N,\N_0$ for the sets of positive integers and non-negative integers, respectively. If $S$ is a set then $|S|$ denotes the cardinality of $S$.
For a set $S$, the notation $[S]^n$ is used for the 
\emdef{$n$-fold cross product}  $S\times\cdots\times S$ of $S$. This notation is used to avoid confusion with the $n$-th power $L^n$ of a language  $L$.
\pnsi
A \emdef{language property} is simply a class (set) of languages.
An \emdef{independent property} is any  language property \indepP for which every language $L\in\indepP$ is included in a \indepP-maximal language $M\in\indepP$. 
By a \emdef{\indepP-maximal language} (or maximal language with respect to \indepP) we mean a language $M$ in \indepP such that $M\cup\{w\}\notin\indepP$, for all words $w\notin M$.
With very few exceptions, the various concepts of codes in the literature are captured by the concept of independence.
When $L\in \indepP$ we say that $L$ \emdef{satisfies} the property \indepP, or that $L$ is an \emdef{independent language} w.r.t. \indepP.
Usually, an independent property has a name $\varphi$ and is denoted as $\indepP_{\varphi}$, in which case a language $L$ satisfying $\indepP_{\varphi}$ is also called a \emdef{$\varphi$-independent} language.
\pnsi
As stated in \cite{JuKo:handbook}, the prime examples of independent languages are those studied in the theory of codes. 
Each class of codes is defined by a certain condition on the words of the code (the language). 
\begin{example}\label{ex:various:codes}
	We recall a few well known classes of codes.
\begin{itemize}
    \setlength{\itemsep}{0pt}%
    \setlength{\parskip}{0pt}%
    \vspace{-0.5\topsep}
	\item A language $L$ is a \emdef{prefix code} if no $L$-word is a  prefix of another $L$-word.
	\item A language $L$ is a \emdef{suffix code} if no $L$-word is a  suffix of another $L$-word.
	\item A language $L$ is an \emdef{infix code} if no $L$-word is an  infix of another $L$-word.
	\item A language $L$ is an \emdef{outfix code} if no $L$-word is an  outfix of another $L$-word.
	\item A language $L$ is a \emdef{hypercode} if no $L$-word results by deleting a number of symbols in another $L$-word.
	\item A language $L$ is a \emdef{uniquely decodable/decipherable code}, or \emdef{UD-code} for short, if every word can be written in at most one way as a concatenation of $L$-words.
\end{itemize}
\end{example}

\pnsn\textbf{Abbreviations:}\; iff = ``if and only if'',\quad
w.r.t. = ``with respect to''

\section{J\"urgensen Independence and Marginal Code Languages}\label{sec:jurg}
An important direction in the research on independent languages is the investigation of systematic methods that allow one to define and study different classes of these languages. 
Next we review some of these methods.

\pmsn
\textbf{Independent languages via partial orders.}
To our knowledge, the first method for defining code properties is the method of 
\cite{Shyr:Thierrin:relations} via certain binary relations on words. A binary relation \rel is a set of pairs of words, that is, $\rel\subseteq\al^*\times\al^*$. 
In \cite{Shyr:book}, the author presents the method via binary relations that are partial orders (in the standard definition). Let \po be a partial order on \alstar.
A language $L$ is called \emdef{\po-independent}, if
\begin{equation}\label{eq:po}
u,v\in L \>\text{ and }\> u\po v \>\text{ imply }\> u=v.
\end{equation}
The same class of languages is defined via the strict version `$<$' of `$\po$':
\begin{equation}\label{eq:spo}
	<\>\cap\>[L]^2=\emptyset.
\end{equation}
Above, $<\,=\,\po-\{(u,u)\mid u\in\alstar\}$, which is the irreflexive and asymmetric version of `$\po$'.
The classic examples of prefix and suffix codes are defined via the partial orders $\ppx, \psx\subseteq[\alstar]^2$, respectively:
\[
\ppx=\{(w,z)\mid z \text{ is a prefix of } w\} \quad\text{ and }\quad \psx=\{(w,z)\mid z \text{ is a suffix of } w\};
\]
or via the strict versions of these orders 
\pmsn\quad
$\spx=\{(w,z)\mid z \text{ is a proper prefix of } w\}$  and
\pssn\quad
$\ssx=\{(w,z)\mid z \text{ is a proper suffix of } w\}.$
\pmsn
Thus, $L$ is a prefix code iff no two \emph{different} words $u,v\in L$  are related via $\ppx$; equivalently, iff no two words $u,v\in L$  are related via $\spx$.


\pmsn
\textbf{J\"urgensen Independence.} 
Let $n$ be a positive integer or $\aleph_0$. An \emdef{$n$-independent property}  is a class of languages \indepP such that
\begin{equation}
L\in\indepP \quad\text{iff}\quad \text{for all $L'\subseteq L$ with $|L'|<n$: $L'\in\indepP$}.
\end{equation}
A \emdef{J\"urgensen property} is simply an $n$-independent property for some $n\in\N\cup\{\aleph_0\}$.
Thus, to tell whether a language $L$ is independent w.r.t. some \indepP, it is sufficient (and necessary) to ensure that every subset of $L$ with $<n$ elements is independent w.r.t. \indepP. 
For example, the class of prefix codes is a \textbf{3}-independent property because a language $L$ is a prefix code iff every subset of $L$ with at most \textbf{2} words is a prefix code.
Why is the J\"urgensen definition interesting? 
One reason is that it can be used to model the independent properties defined by all previous methods. 
Another reason is that the J\"urgensen method guarantees that every independent language is included in a maximal one \cite{JuKo:handbook}.
A third reason is that the class of UD-codes cannot be modelled by the previous methods, as it is an $\aleph_0$-independence
(a language $L$ is a UD-code iff every finite subset of $L$ is a UD-code), but the class of UD-codes is \emshort{not} an $n$ independent property, for any $n<\aleph_0$, \cite{JuKo:handbook}.

\subsection{Marginal properties.}\label{sec:marginal:def}
The \po-independent properties defined in~\eqref{eq:spo} require that no pair of $L$-words are related via the order `$<$'.
For example, when $L$ is a prefix code, no $L$-word is a proper prefix of an $L$-word.
In \cite{KoHanSalomaa:2021}, Ko, Han and Salomaa, relax this condition by allowing an $L$-word to be a proper prefix of at most $k$ words, for some `error' margin $k\in\N_0$. 
They define the class of \emdef{$k$-prefix-free languages}, which is here denoted by \indephyph{k}{\mathrm{px}}, as follows
\begin{equation}\label{eq:k:prefix}
	L\in \indephyph{k}{\mathrm{px}},\quad\text{if every word of $L$ is a proper prefix of at most $k$ words of $L$.} 
\end{equation}
It follows that \indephyph{0}{\mathrm{px}} is exactly the class of prefix codes.
The class \indephyph{\mathrm{fin}}{\mathrm{px}} of \emdef{finitely prefix-free languages} is defined as follows  
\begin{equation}\label{eq:fin:prefix}
	L\in \indephyph{\mathrm{fin}}{\mathrm{px}},\quad\text{if $\forall w\in L$, $w$ is a proper prefix of at most finitely many $L$-words.} 
\end{equation}
Unlike the case of prefix codes (0-prefix languages) which are UD-codes, it is important to note that, for $k>0$, these languages are not necessarily UD-codes. For example, $\{0,00\}$ is a 1-prefix-free language but it is not a UD-code.
\pnsi
The classes \indephyph{k}{\mathrm{sx}} and \indephyph{k}{\mathrm{ix}} of \emdef{$k$-suffix-free languages} and \emdef{$k$-infix-free languages} are defined analogously, as well as the classes \indephyph{\mathrm{fin}}{\mathrm{sx}} and \indephyph{\mathrm{fin}}{\mathrm{ix}}, \cite{KoHanSalomaa:2021}.
\pnsi
We extend the above concepts via the mechanism of partial orders.

\begin{definition}\label{def:alpha:margin}
	Let $\po_\alpha$ be a partial order on \alstar, where $\alpha$ is used as the name of the partial order. The class \indephyph{k}{\alpha} of $k$-$\alpha$ languages is defined as follows:  
	\[
	L\in \indephyph{k}{\alpha}, \quad \text{if every word of $L$ is smaller w.r.t. `$\spo_{\alpha}$'  by at most $k$ words of $L$.}
	\]
The class \indephyph{\mathrm{fin}}{\alpha} of finitely-$\alpha$ languages is defined as above by using the condition ``every word of $L$ is smaller w.r.t. `$\spo_{\alpha}$'  by at most finitely many words of $L$''.
\end{definition}

Thus, we can talk about the classes of $k$-hypercode and $k$-outfix-free  languages, based on  appropriate partial orders. 
Note that every singleton language $\{w\}$ is a $k$-$\alpha$ language, for every partial order $\po_\alpha$ and every $k\in\N_0$.
While for all $k\in\N_0$, the classes \indephyph{k}{\alpha} are independent properties, this is not the case for the classes \indephyph{\mathrm{fin}}{\alpha}. 
This is shown in the next theorem, in which the term \emdef{right-infinite order} $\po_\alpha$ means that, for every word $u$, there is a word $v$ such that $u\spo_\alpha v$.

\begin{theorem}\label{th:alpha:margin}
	Let $\po_\alpha$ be any partial order on words and let $k\in\N_0$.
	The following statements hold true.
	\begin{enumerate}
		\item The class \indephyph{k}{\alpha} is a (k+3)-independent property; hence, every $k$-$\alpha$ language is included in a maximal one.
		\item If $\po_\alpha$ is a right-infinite order then, for every finitely-$\alpha$ language $L$, there is a word $z\notin L$ such that $L\cup\{z\}$ is a finitely-$\alpha$ language. Hence, the class \indephyph{\mathrm{fin}}{\alpha} is not a J\"urgensen property.
	\end{enumerate}
\end{theorem}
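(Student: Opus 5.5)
For statement 1, I will verify the defining equivalence of a $(k+3)$-independent property directly. The ``only if'' direction is the easy one, because the property is hereditary. If every word of $L$ lies strictly below (w.r.t. $\spo_\alpha$) at most $k$ words of $L$, then the same holds in any $L'\subseteq L$, since the set of $L'$-words above a given word is a subset of the set of $L$-words above it.

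For the ``if'' direction I argue by contraposition. Suppose $L\notin\indephyph{k}{\alpha}$. Then some $w\in L$ satisfies $w\spo_\alpha v_i$ for $k+1$ distinct words $v_1,\dots,v_{k+1}\in L$. Each $v_i\ne w$ because $\spo_\alpha$ is irreflexive. Hence $L'=\{w,v_1,\dots,v_{k+1}\}$ has exactly $k+2<k+3$ elements and already violates the condition, so $L'\notin\indephyph{k}{\alpha}$. This gives the equivalence. The maximality consequence then follows from the cited fact \cite{JuKo:handbook} that every language in a J\"urgensen property is contained in a maximal one. I can also argue it directly via Zorn's lemma: the union of a chain of $k$-$\alpha$ languages is again $k$-$\alpha$, because any violation is witnessed by a finite subset, and that subset already lies in some member of the chain.

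For statement 2, let $L$ be finitely-$\alpha$. The idea is to add a word $z\notin L$ such that $L\cup\{z\}$ remains finitely-$\alpha$. I will need the observation that if $L=\emptyset$ then any word works, and also that $\alstar$ itself cannot be finitely-$\alpha$: by right-infiniteness the chain $u\spo_\alpha v_1\spo_\alpha v_2\spo_\alpha\cdots$ produces infinitely many distinct words above $u$, by transitivity and antisymmetry. So $L\neq\alstar$.

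Adding a word $z\notin L$ can fail in two ways. First, $z$ itself could lie below infinitely many $L$-words. Second, some $w\in L$ that previously had finitely many $L$-words above it gains $z$; this is harmless, since finitely many plus one is still finite. So the only real requirement is that $z\notin L$ lies strictly below only finitely many words of $L$.

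The plan for finding such a $z$ uses right-infiniteness. Pick any $u$ and form an infinite ascending chain $u=v_0\spo_\alpha v_1\spo_\alpha\cdots$. I try to take $z$ high on this chain, or more simply a maximal-ish element. If some $v_j\notin L$ has finitely many $L$-words above it, I am done. Otherwise, fix $w\in L$, which has only finitely many $L$-words above it. Every $v$ with $w\spo_\alpha v$ has its $L$-successors contained in those of $w$, by transitivity, hence finitely many. So it suffices to find $v\notin L$ with $w\spo_\alpha v$.

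The set of words above $w$ is infinite, by the chain argument, while only finitely many of them are in $L$. So such a $v$ exists, and I take $z=v$. (If $L=\emptyset$, any $z$ works.)

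Finally, this shows that no finitely-$\alpha$ language is \indephyph{\mathrm{fin}}{\alpha}-maximal. Since the empty language is in the class and is contained in no maximal language, the class is not an independent property, and so it is not a J\"urgensen property, because J\"urgensen properties are independent \cite{JuKo:handbook}. The main obstacle is the transitivity bookkeeping in this step, namely confirming that $z$ above $w$ inherits finiteness; that step is straightforward once we choose $z$ above an $L$-word.
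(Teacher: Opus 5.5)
Your proposal is correct, and it takes a slightly different route from the paper in the key step of part~2. In part~1 you make explicit what the paper leaves as ``standard logical arguments'': heredity for one direction, and the $(k+2)$-element witness $\{w,v_1,\dots,v_{k+1}\}$ for the other. Your justification of the final ``hence'' in part~2 is the one the paper implicitly relies on.

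The difference is in how you obtain $z$. Write $V(x)=\{v\mid x<_\alpha v\}$.

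\emph{The paper's route.} It fixes $u\in L$, takes a $<_\alpha$-maximal element $v_r$ of the finite set $L\cap V(u)$, and applies right-infiniteness once to get $v_r<_\alpha z$. Maximality of $v_r$ then yields both $z\notin L$ and the stronger fact $L\cap V(z)=\emptyset$.

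\emph{Your route.} You iterate right-infiniteness to show that $V(w)$ is infinite, with transitivity and antisymmetry making the chain elements distinct. You then pick $z\in V(w)\setminus L$ by counting, and only need $L\cap V(z)\subseteq L\cap V(w)$ to be finite.

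The paper's argument uses the hypothesis more economically and produces a cleaner $z$. Yours avoids choosing a maximal element. As a result it also covers the case $L\cap V(u)=\emptyset$ without a separate argument; there the paper's ``maximal $v_r$'' does not exist, and one must simply take $u<_\alpha z$.

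Your preliminary detour is unnecessary once you fix $w\in L$. This covers the chains starting from an arbitrary $u$ and the observation that $\Sigma^*$ is not finitely-$\alpha$.
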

	\begin{proof}
		The first statement follows from the definitions of J\"urgensen property and $k$-$\alpha$ language, using standard logical arguments. 
		For the second statement, consider any finitely-$\alpha$ language $L$. We show that there is a word $z\notin L$ such that $L\cup\{z\}$ is a finitely-$\alpha$ language.
		If $L$ is empty then $z$ can be any word: $L\cup\{z\}=\{z\}$, which is finitely-$\alpha$. If $L$ is not empty, consider any word $u\in L$. 
		We use the notation $V(x) = \{v\mid x\spo_\alpha v\}$, for any word $x$.
		As $L$ is finitely-$\alpha$, the set $L\cap V(u)$ is finite; so it is of the form 
		\[
		L\cap V(u)\>=\>\{v_1,\ldots,v_n\}.
		\]
		Consider any element $v_r$ which is maximal in $\{v_1,\ldots,v_n\}$; that is, $v_r\not\spo_\alpha v_j$ for all $j=1,\ldots, n$.
		As $\po_\alpha$ is right-infinite, there is a word $z$ such that $v_r\spo_\alpha z$. Moreover, we have that $u\spo_\alpha v_r\spo_\alpha z$ and $z\notin\{v_1,\ldots,v_n\}$.
		We show next that $z\notin L$ and $L\cup\{z\}$ is a finitely-$\alpha$ language.
		\pnsi
		As $z\in V(u)$ and $z\notin L\cap V(u)$, we have that $z\notin L$. 
		\pnsi
		Let $w$ be any word in $L\cup\{z\}$. 		
		We need to show that  $w$ is smaller w.r.t. `$\spo$'  by at most finitely many words of $L\cup\{z\}$; that is, the set $\big(L\cup\{z\}\big)\cap V(w)$ is finite. First note that
		\[
		\big(L\cup\{z\}\big)\cap V(w)\>=\> \big(L\cap V(w)\big)  \cup  \big(\{z\}\cap V(w)\big).
		\]
		If $w\not=z$ then $w\in L$ and $L\cap V(w)$ is finite and, therefore $\big(L\cup\{z\}\big)\cap V(w)$ is finite as well.
		If $w=z$ then it is sufficient to show that $L\cap V(z)$ is empty. 
		If there were $z_1\in L\cap V(z)$ then $u\spo_\alpha z \spo_\alpha z_1$, implying that $z_1\in V(u)$ and also $z_1\in\{v_1,\ldots,v_n\}$, which is impossible.
	\end{proof}
All partial orders  corresponding to the classes of codes in Example~\ref{ex:various:codes}, other than the class of UD-codes, are right-infinite orders.
\pnsi
It is interesting to note that, if a language $L$ is regular and finitely prefix-free then $L$ is $k$-prefix-free for some $k\ge0$; but this is not the case, in general, if $L$ is context-free \cite{KoHanSalomaa:2021}.

\section{Decidability of Satisfaction}\label{sec:dec}
The \emdef{satisfaction problem} for a \emph{fixed} independent property \indepP is to decide whether a given language $L$ satisfies \indepP; that is whether $L\in\indepP$. 
The language is given as input via a well-defined description method. 
Usually, the input is a finite automaton describing a regular language. 
This is a reasonable assumption in coding theory, as the languages of interest are usually finite, or regular.
Of course it is perfectly fine to assume that the input describing the language is a context-free grammar or a pushdown automaton; however in this case, the satisfaction problem is usually undecidable \cite{JuKo:handbook}.

\pnsi
In Section~\ref{sec:formal}, we discuss the \emph{uniform} versions of the above decision problems, where in addition to the description of the language $L$, the desired algorithm also takes as input a description of the  independent property~\indepP.
\pmsn
\textbf{Decidability of satisfaction of marginal properties.}
These properties are from \cite{KoHanSalomaa:2021} and are defined in the statements~\eqref{eq:k:prefix} and~\eqref{eq:fin:prefix} of   Section~\ref{sec:marginal:def}. As shown in \cite{KoHanSalomaa:2021}, the $k$-marginal properties are hard to decide---see further below for the finitely-marginal properties:
\begin{itemize}
    \setlength{\itemsep}{0pt}%
    \setlength{\parskip}{0pt}%
    \vspace{-0.5\topsep}
	\item Deciding whether the language $L(\auta)$ is $k$-prefix-free, for given automaton \auta and integer $k\ge0$, is PSPACE-complete.
	\item Deciding whether the language $L(\auta)$ is $k$-suffix-free, for given automaton \auta and integer $k\ge0$, is PSPACE-complete.
	\item Deciding whether the language $L(\auta)$ is $k$-infix-free, for given automaton \auta and integer $k\ge0$, is PSPACE-hard.
\end{itemize}
It was left as an open question in \cite{KoHanSalomaa:2021} whether the satisfaction of the $k$-infix-free property is in the class PSPACE.
Then, the question was answered in the affirmative in  \cite{IbaMcQu:2023}---in fact, even for the larger machine class where \auta is a finite-crossing 2-way machine with a number of reversal-bounded counters. 
\pnsi
For the case of the finitely marginal properties---see definition in statement~\eqref{eq:fin:prefix}---the satisfaction problem can be decided efficiently: 
\begin{itemize}
	\item Assuming a fixed (but arbitrary) alphabet, there is an $O(|\auta|^2)$ algorithm to decide whether a given automaton \auta accepts a finitely-prefix-free language (also, finitely-suffix-free, finitely-infix-free). 
\end{itemize}
We return to the satisfaction problem of marginal properties in Section~\ref{sec:marginal}.

\section{Formal Independent Properties \& Marginal Versions}\label{sec:formal}
Here we consider the idea that an independent property can be given as input to an algorithm which would return an answer to a desired question about the property. 
For this, it is necessary to have  a syntactic method of describing an  independent property via some expression \descrf.
We shall write \indepf to denote the independent property \emdef{described by} \descrf (see further below about a selection of syntactic methods for describing independent properties).
We shall use the term \emdef{$\descrf$-independent language} for a language in \indepf (a language that satisfies \indepf).
\pssi
We are interested in the \emdef{uniform satisfaction problem}: 
\begin{description}
	\item{\textsf{Input:}} the description \descrf of an independent property and the description of a language $L$
	\item{\textsf{Question:}} whether $L$ satisfies \indepf.
\end{description}
\pnsi
We are also interested in the \emdef{uniform maximality problem}: 
\begin{description}
	\item{\textsf{Input:}} the description \descrf of an independent property and the description of a language $L$
	\item{\textsf{Question:}} whether $L$ is maximal satisfying \indepf.
\end{description}

\pmsn
\textbf{The formal method of transducers.}
The method of \cite{DudKon:2012,Kon:2017,KMMR:2015} uses transducers to describe rational binary relations (including several partial orders), defining thus several 3-independent properties, as in~\eqref{eq:po}. 
Let \trt be 
a transducer.
The property $\indep{\trt}$ described by \trt is defined as follows 
\begin{equation}\label{eq:TRANSD}
	L\in\indep{\trt}, \quad\text{if}\> v\in\trt(u) \>\text{ implies }\> u=v,\>\text{ for all $u,v\in L$}.
\end{equation}
If the transducer \trt is \emdef{input-altering}, meaning that $w\notin\trt(w)$ for all words $w$, then we have a simpler characterization of \indep{\trt}:
\begin{equation}\label{eq:inp:alter}
	L\in\indep{\trt} \quad \text{iff} \quad\trt(L)\cap L=\emptyset.
\end{equation}
The transducer $\tsx$ in the figure below is input-altering and describes the class of suffix codes, that is, $L$ is a suffix code iff $\tsx(L)\cap L=\emptyset$. The transition label $a/\ew$ actually means multiple transitions, one for each $a\in\al$, and similarly for the other labels. The label $a/a'$ also means multiple transitions, one for each pair $a,a'\in\al$ with $a\not=a'$.
The transducer \thc below is input-altering and describes the class of hypercodes.
The transducer \ted  is input-altering and describes the class of 1-error detecting languages\footnote{A language $L$ is $k$-error detecting if the Hamming distance of any two different $L$-words of equal lengths is $>k$. The transducer method is expressible enough to describe error-detecting languages for many error combinations, including those in~\cite{Kon:2001}.}
\pnsn
\begin{fsm}[node distance=1.8cm,every state/.style={inner sep=1pt,minimum size=0.5cm}]
	\node [state,initial] (q2) {$0$};
	\node [node distance=0.61cm,left=of q2,anchor=east] {$\tsx:$};
	\node [state,accepting,right of=q2] (q3) {$1$};
	\node [state,initial,right of=q3,node distance=2.5cm] (q4) {$0$};
	\node [node distance=0.60cm,left=of q4,anchor=east] {$\thc\colon$};
	\node [state,accepting,right of=q4] (q5) {$1$};
	\node [state,initial,right of=q5,node distance=2.5cm] (q6) {$0$};
	\node [node distance=0.60cm,left=of q6,anchor=east] {$\ted\colon$};
	\node [state,accepting,right of=q6] (q7) {$1$};
	\path 		(q2) edge node [above] {$a/\ew$} (q3)
		(q2) edge [loop above] node [above] {$a/\ew$} ()
		(q3) edge [loop above] node [above] {$a/a$} ()
        (q4) edge [loop above] node [above] {$a/a$} ()
		(q4) edge node [above] {$a/\ew$} (q5)
		(q5) edge [loop above] node [above] {$a/\ew$, $a/a$} ()	
        (q6) edge [loop above] node [above] {$a/a$} ()
		(q6) edge node [above] {$a/a'$} (q7)
		(q7) edge [loop above] node [above] {$a/a$} ()	
		;
\end{fsm}
We have that $\tsx(v)$ = set of proper suffixes of $v$. 
Notice that the transducer method realizes the inverse of the rational partial orders in Section~\ref{sec:jurg}. For example, we have that $\big({\spo_{\rm sx}}\big)^{-1}=R(\tsx)$.
\pnsi
In \cite{DudKon:2012}, it is shown that, for every regular trajectory property $\indepP_{\rea}$, there is an input-altering transducer \trt  describing the same independent property, that is, $\indepP_{\rea}=\indepP_{\trt}$.
Moreover, there are natural error-detecting properties that cannot be described by trajectories. 
The method of transducers shown above can only describe 3-independent properties. 
When the transducers involved are input-altering, the method formalizes the approach of independence via $n$-ary relations that are rational for $n=2$---see the definition at equation~\eqref{eq:omega} and note that equation~\eqref{eq:inp:alter} is equivalent to $R(\trt)\cap[L]^2=\emptyset$.
\pnsi
The use of transducers allows one to decide both, the uniform satisfaction problem and the uniform maximality problem for regular languages:

\begin{theorem}[\cite{DudKon:2012,KaKoKo:2018}]\label{th:TRANSD:sat}
	Assuming a fixed (but arbitrary) alphabet, 
	  there is\footnote{The polynomial time of the algorithm is noted in \cite{DudKon:2012}, and the $O(|\trt||\auta|^2)$ time is established in \cite[Theorem~21]{KaKoKo:2018}.} an $O(|\trt||\auta|^2)$ time algorithm to decide, given transducer \trt and automaton \auta, whether $L(\auta)\in\indep{\trt}$.
\end{theorem}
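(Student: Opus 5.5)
We reduce the satisfaction question to an emptiness test on a product construction. First we normalise the input. Let $\auta=(P,\al,\delta,s,F)$ be the automaton; we may take it trim, meaning every state is accessible and coaccessible, which costs linear time. By~\eqref{eq:TRANSD}, $L(\auta)\notin\indep{\trt}$ iff there exist words $u,v\in L(\auta)$ with $u\neq v$ and $v\in\trt(u)$. So we must decide whether the relation
\[
R(\trt)\cap\big(L(\auta)\times L(\auta)\big)\cap\{(u,v)\mid u\neq v\}
\]
is nonempty.

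\textbf{Step 1 (product transducer).} We build a transducer $\trs$ whose states are triples $(p,q,p')$, with $p,p'$ states of \auta and $q$ a state of \trt. Its transitions simulate a transition $(q,x/y,q_1)$ of \trt. On the input side, \auta moves from $p$ on $x$, or stays put when $x=\ew$. On the output side, a second copy of \auta moves from $p'$ on $y$, or stays put when $y=\ew$. The initial states are $(s,q_0,s)$ with $q_0\in I$, and the final states are $(f,q_f,f')$ with $f,f'\in F$ and $q_f$ final in \trt. Then $R(\trs)=R(\trt)\cap[L(\auta)]^2$. The number of transitions is $O(|\trt||\auta|^2)$ for a fixed alphabet, since each transition of \trt is paired with at most $O(1)$-many choices of \auta-edges per state pair. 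The degree bound uses the fixed alphabet, or we count edges of \auta directly.

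\textbf{Step 2 (deciding non-identity).} The problem now reduces to whether $\trs$ realizes some pair $(u,v)$ with $u\neq v$. Equivalently, we ask whether $R(\trs)$ is contained in the identity relation, which is decidable in linear time in $|\trs|$. For this we use the standard non-functionality or identity test: annotate each state of the trimmed $\trs$ with the "delay" between the input and output produced so far along paths reaching it. That delay is a word that is a pending input suffix or a pending output suffix. If every accepting path realizes the identity, each useful state has a unique delay of bounded length; otherwise a conflict is detected. Concretely, we do a search from the initial states, assigning delays. We report "not independent" as soon as one of three things happens: a delay cannot be reconciled (a mismatch of symbols), a state receives two different delays, or a final state has nonempty delay. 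Correctness relies on trimness: every reached state lies on some accepting path, so any conflict yields two distinct realized pairs, or one pair with $u\neq v$.

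\textbf{Main obstacle.} The delicate part is Step~2, namely proving that delays in an identity-only trim transducer are unique and have length at most the number of states, so that the search is linear in $|\trs|$. We would argue this by pumping. If a state carried two delays, or one long delay, then a cycle-free completion to a final state would force some accepting path with unequal input and output. The overall time is then $O(|\trs|)=O(|\trt||\auta|^2)$, as claimed in Theorem~\ref{th:TRANSD:sat}.
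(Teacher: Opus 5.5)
Your reduction is sound. $R(\trs)=R(\trt)\cap[L(\auta)]^2$, the product has $O(|\trt||\auta|^2)$ transitions, and in a trim $\trs$ realizing a subset of the identity every state does carry a unique signed delay, since any completion path from the state forces it. So your three conflict rules correctly decide whether $R(\trs)$ is a subset of the identity. The paper gives no proof of this statement; it attributes polynomial time to \cite{DudKon:2012} and the $O(|\trt||\auta|^2)$ bound to \cite{KaKoKo:2018}.

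\textbf{The gap is the time bound, which is the whole content of the statement.} Your sentence ``delays are unique and have length at most the number of states, so that the search is linear in $|\trs|$'' does not follow. Each time the search crosses a transition into an already visited state, it must compare the updated delay with the stored one. With delays stored as explicit words, that comparison costs up to the number of states of $\trs$ per transition. This worst case really occurs: take two chains of $\ew/a$ transitions with $\ew/a$ cross transitions between them. Every cross transition then forces a full comparison of a long delay. Cheap partial checks do not help either: the delays $aab$ and $abb$ agree in length, first symbol and last symbol, yet must be told apart. As written, your search therefore costs $O(|\trs|)$ times the number of states of $\trs$, i.e.\ $O(|\trt|^2|\auta|^4)$. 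That is a polynomial bound, as in \cite{DudKon:2012}, but not $O(|\trt||\auta|^2)$.

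\textbf{What is missing is a delay representation with constant-time update and equality tests.} One option:
\begin{itemize}
\item Along the DFS tree, store for each state $q$ the lengths and Karp--Rabin fingerprints of its tree-path labels $X_q,Y_q$.
\item Since $Y_q=X_q\,d(q)$ (or the symmetric case), the fingerprint of $d(q)$, and of its update across any transition, is computable in $O(1)$.
\item The first symbol of $d(q)$, needed for the mismatch test, can be read from the DFS stack, because $q$ is on the stack while its outgoing transitions are processed.
\end{itemize}
This gives a linear test, but only a randomized one. A deterministic linear test needs more machinery, for instance a suffix-tree-of-a-tree structure with lowest-common-ancestor queries to compare such windows in constant time.

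For input-altering $\trt$ the issue disappears. Then $L(\auta)\in\indep{\trt}$ iff the trim part of $\trt\downarrow\auta\uparrow\auta$ is empty, which is a linear-time reachability test. The general case is exactly where a linear-time partial-identity test must be supplied and justified, and your proposal does not supply one.
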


\begin{theorem}[\cite{DudKon:2012}]\label{th:TRANSD:max}
	It is PSPACE-complete to decide, given transducer\footnote{Theorem 9 of \cite{DudKon:2012} states that the problem is PSPACE-hard, but it is easy to see that it is in fact PSPACE-complete. Moreover, Theorem 9 of \cite{DudKon:2012} assumes that the given transducer \trt is input-preserving, but this restriction is not necessary---a transducer \trt is input-preserving, if $\trt(w)\not=\emptyset$ implies $w\in\trt(w)$, for all words $w$. 
	In fact, the problem is PSPACE-hard even if \trt is input-altering.} 
	\trt and automaton \auta,  whether $L(\auta)$ is maximal in \indep{\trt}.
\end{theorem}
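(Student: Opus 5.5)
The proof has two parts: membership in PSPACE, via a reduction to universality of an automaton of polynomial size, and PSPACE-hardness, via a reduction from NFA universality.

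\textbf{Step 1: characterize maximality.} Let $L=L(\auta)$. By definition, $L$ is maximal in \indep{\trt} iff $L\in\indep{\trt}$ and, for every word $w\notin L$, $L\cup\{w\}\notin\indep{\trt}$. Assume $L\in\indep{\trt}$ and take $w\notin L$. The pair $(w,w)$ never violates condition~\eqref{eq:TRANSD}, and pairs inside $L$ do not violate it either. Hence $L\cup\{w\}$ violates \eqref{eq:TRANSD} iff there is $u\in L$ with $w\in\trt(u)$ or $u\in\trt(w)$; here $u\neq w$ holds automatically. The condition $u\in\trt(w)$ means $w\in\trt^{-1}(u)$. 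Therefore
\[
L \text{ is maximal in } \indep{\trt} \iff L\in\indep{\trt} \text{ and } L\cup\trt(L)\cup\trt^{-1}(L)=\alstar .
\]

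\textbf{Step 2: membership in PSPACE.} The first conjunct is decidable in polynomial time by Theorem~\ref{th:TRANSD:sat}. For the second, build by the standard product construction an NFA for $\trt(L)$ and one for $\trt^{-1}(L)$, each of size $O(|\trt||\auta|)$. Take their union with \auta. The result is an NFA of polynomial size, and we ask whether it is universal. NFA universality is in PSPACE: guess a rejected word symbol by symbol while keeping only the current subset of states, then apply Savitch's theorem. This gives membership in PSPACE.

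\textbf{Step 3: hardness.} We reduce from NFA universality, which is PSPACE-complete. Given an NFA $\bm b$ over \al, add a fresh symbol $\#$ and work over $\al'=\al\cup\{\#\}$. Let $L=\{\#\}$ and let $D$ be the set of words over $\al'$ that contain $\#$ and differ from $\#$; clearly $D$ has a small DFA. Build a transducer \trt realizing $\{\#\}\times\big(L(\bm b)\cup D\big)$. It reads $\#$ with empty output and then runs $\bm b$ or the automaton of $D$ on the output side, so its size is polynomial in $|\bm b|$. Since $\#\notin L(\bm b)\cup D$, the transducer is input-altering and $L\in\indep{\trt}$.

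Only the input $\#$ has outputs, and $\#$ is not among them, so $\trt^{-1}(L)=\emptyset$. Hence $L\cup\trt(L)\cup\trt^{-1}(L)=\{\#\}\cup D\cup L(\bm b)$. This set equals $(\al')^*$ iff $\alstar\subseteq L(\bm b)$, that is, iff $\bm b$ is universal. By Step 1, $L$ is maximal in \indep{\trt} iff $\bm b$ is universal.

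The only delicate point is Step 1: one must check that adding $w$ cannot create a violation through the pair $(w,w)$. This is exactly where the definition~\eqref{eq:TRANSD} (violations require $u\neq v$) is used, and it is why no assumption of input-preservation is needed. The hardness instance is moreover input-altering, as claimed in the footnote.
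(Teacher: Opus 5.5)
Your proof is correct. The membership half follows the route the paper alludes to. The characterization $L\cup\trt(L)\cup\trt^{-1}(L)=\alstar$ (for $L\in\indep{\trt}$), which the paper quotes from \cite{DudKon:2012} in its concluding remarks, reduces the question to NFA universality. Your Step~1 justifies that characterization cleanly, and it correctly uses no input-preservation assumption.

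For hardness the paper gives no argument of its own. It cites Theorem~9 of \cite{DudKon:2012} and only asserts in the footnote that hardness persists for input-altering transducers. Your reduction therefore supplies an actual proof of that footnote claim.

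A much shorter reduction is also available. Let \trt realize the empty relation: one initial, non-final state and no transitions. Then \trt is vacuously input-altering and every language lies in \indep{\trt}. So $L(\auta)$ is maximal iff $L(\auta)=\alstar$, i.e., maximality is literally NFA universality. That version puts all the hardness into the automaton and shows PSPACE-hardness even for a fixed input-altering transducer.

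Your construction does the opposite. It keeps the language fixed at $\{\#\}$ and encodes the instance in \trt. If you start from universality over a binary alphabet, the whole automaton is then fixed, so your reduction shows hardness even for a fixed automaton. The two reductions are complementary, and either one suffices for the theorem.
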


\subsection{A formal method for marginal independent properties.}\label{sec:marginal}
Definition~\ref{def:alpha:margin} extends the marginal independent properties of \cite{KoHanSalomaa:2021} in a natural way. 
However, when we want to deal with the decidability of the \emph{uniform} satisfaction problem, we need to allow a description of  the partial order to be given as input to the desired algorithm.
We use transducers to describe many partial orders (in fact any rational binary relation).
We extend now the definition of \indep{\trt} in~\eqref{eq:TRANSD} to one for marginal properties. Note that the condition in~\eqref{eq:TRANSD} is equivalent to ``$\trt(u)\cap(L-\{u\})=\emptyset$, for all $u\in L$''. Here is the extended definition:
\begin{equation}\label{eq:TRANSD:margin}
L\in \indep{\trt,k}, \quad\text{if }\;
|\trt(u)\cap (L-\{u\})|\le k,\> \text{for all $u\in L$}.
\end{equation}
The marginal properties of \cite{KoHanSalomaa:2021} and many more can be defined as in~(\ref{eq:TRANSD:margin}) above via the use of  appropriate transducers \trt.
A motivation for introducing marginal properties is to relax the strict  emptiness condition $\trt(u)\cap (L-\{u\})=\emptyset$ to the condition $|\trt(u)\cap (L-\{u\})|\le k$, so it makes sense to assume that $k$ is small, or even fixed.
For \emph{fixed $k$}, the uniform satisfaction problem of transducer marginal properties is decidable in polynomial time for regular languages:

\begin{theorem}\label{th:gen:sat:margin}
Let $k$ be a fixed nonnegative integer. 
There is a polynomial time algorithm to decide, given  transducer \trt and automaton \auta, whether  $L(\auta)\in\indep{\trt,k}$. 
\end{theorem}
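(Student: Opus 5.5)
Since $k$ is fixed, I would reduce the question to an emptiness test for a finite automaton of polynomial size. The condition fails, i.e.\ $L=L(\auta)\notin\indep{\trt,k}$, exactly when there exist words $u,v_1,\ldots,v_{k+1}\in L$ such that the $v_i$ are pairwise distinct, each $v_i\neq u$, and $v_i\in\trt(u)$ for all $i$. So the complement of the property is witnessed by a $(k+2)$-tuple of words satisfying a conjunction of rational constraints. The plan is to recognise the set of such witness tuples with a multi-tape device and test it for emptiness.

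\textbf{Construction.} First I would build a $(k+2)$-tape automaton $\bm{m}$ whose state set is roughly $[Q_{\trt}]^{k+1}\times[Q_{\auta}]^{k+2}$, polynomial because $k$ is constant. It reads $u$ on tape $0$ and $v_i$ on tape $i$. It runs $k+1$ copies of \trt, where copy $i$ reads tape $0$ as input and tape $i$ as output. In parallel it runs $k+2$ copies of \auta, one checking membership of each tape in $L$.

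\textbf{Synchronisation.} The difficulty is that the copies of \trt consume the common input $u$ at different rates, because of the $\ew$-input transitions. To handle this, I would let $\bm{m}$ advance the input head on tape $0$ only when all copies simultaneously take a transition with input symbol $a$. Between such steps, each copy may independently take $\ew$-input transitions that write on its own output tape. This is the standard product construction for intersecting transducers with a common input, and it stays polynomial.

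\textbf{Distinctness constraints.} The constraints $v_i\neq v_j$ and $v_i\neq u$ are the main obstacle: inequality of two words is rational but not synchronous, and there are $\binom{k+2}{2}$ of them. I would try the following first.
\begin{itemize}
\item Nondeterministically guess, for each constrained pair, a witness of inequality. This is either a position $p$ where the two words carry different symbols, or the fact that one word ends while the other continues.
\item Verify each guessed witness with a counter on the length difference, bounded by the input length. That is not finite-state, so instead I would use the fact that emptiness of multi-tape NFAs with a constant number of tapes reduces to emptiness of a one-tape NFA over the projection. Concretely, project $\bm{m}$ onto a single tape with an $\ew$-free relabelling, keeping the states guessed for the mismatch verifications.
\item For each pair $(x,y)$, the automaton guesses the position $p$ in $x$ and the position in $y$. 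It enforces that the two prefixes have equal length by running, in the one-tape projection, a pair-synchronisation. This is where I expect to need care.
\end{itemize}

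\textbf{Fallback.} A cleaner alternative is to use the fact, as in the method behind Theorem~\ref{th:TRANSD:sat}, that emptiness of $R(\trt')\cap[L]^2$ is checkable in polynomial time, and to build \emph{one} transducer of polynomial size. Concretely, encode the tuple $(v_1,\ldots,v_{k+1})$ as a single word over the alphabet $[\al\cup\{\#\}]^{k+1}$ with padding. Synchronising the $v_i$ letter by letter makes the pairwise inequalities checkable by a finite automaton (remember one bit per pair), and likewise the padded membership in $L$. The pair $(u,\text{padded tuple})$ is then realised by a transducer built from the $k+1$ copies of \trt; it must also align the outputs, which is possible because padding lets each copy output at its own pace. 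The comparison $v_i\neq u$ is handled analogously by also carrying $u$ padded as a track.

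\textbf{Conclusion.} Emptiness of this polynomial-size transducer's domain/range restricted to $L$ decides the complement. I would regard the synchronisation of the $k+1$ outputs against the shared input as the crux, and verify that it stays polynomial for fixed $k$.
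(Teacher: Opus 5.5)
Your product construction correctly handles the constraints $v_i\in\trt(u)$ and $u,v_i\in L(\auta)$. The gap is the step you flag yourself: enforcing $v_i\neq v_j$ and $v_i\neq u$. You never actually carry it out, and neither of your routes can do it by finite-state means.

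In the first route you rightly see that matching two guessed mismatch positions needs a counter. You then try to avoid the counter by projecting onto one tape and ``pair-synchronising'' there. Projection discards exactly the relative offset you need, and no finite automaton can check that two asynchronously produced positions coincide.

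The fallback rests on a false claim. Consider the map $u\mapsto v_1\otimes\cdots\otimes v_{k+1}$, where the tracks are aligned letter by letter and padded with $\#$ only at the end, and $v_i\in\trt(u)$. This map is not rational in general, because the copies of $\trt$ can drift apart unboundedly on the shared input. For instance, take $R(\trt)=\{(a^n,b^n)\mid n\ge0\}\cup\{(a^n,c^{2n})\mid n\ge0\}$ and $k=1$. The range of the map, intersected with the regular set $(b,c)^*(\#,c)^*$, is $\{(b,c)^n(\#,c)^n\mid n\ge0\}$, which is not regular. If instead you let $\#$ occur anywhere, so that each copy emits at its own pace, a product transducer does exist. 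But then the tracks are unaligned, and the inequality test is again not finite-state. Pairwise distinctness of asynchronously generated outputs is precisely the missing idea.

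The fix is to keep the counters rather than avoid them. For each of the $O(k^2)$ constrained pairs, guess a witness of inequality and verify it with reversal-bounded counters. For example, use two counters per pair that count the lengths of the relevant prefixes (or of the whole words) and are compared at the end. For fixed $k$ this gives a machine with polynomially many states and a fixed number of reversal-bounded counters. Its emptiness is decidable in polynomial time by a result of Gurari and Ibarra.

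The paper uses this machinery in packaged form, via the polynomial-time $k$-valuedness test of \cite{GurIba:1983}:
\begin{itemize}
\item It adds an identity component to $\trt$, obtaining $\trt^{=}$ with $\trt^{=}(w)=\{w\}\cup\trt(w)$.
\item It sets $\trs=\trt^{=}\downarrow\auta\uparrow\auta$, so that $\trs(u)=(\trt(u)\cup\{u\})\cap L(\auta)$ for $u\in L(\auta)$.
\item It observes that $L(\auta)\in\indep{\trt,k}$ iff $\trs$ is $(k+1)$-valued.
\end{itemize}
Putting $u$ itself among the outputs also absorbs your separate constraints $v_i\neq u$ into plain pairwise distinctness.
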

\begin{proof}
If $k=0$ then $\indep{\trt,k}=\indep{\trt}$, and the statement follows immediately from Theorem~\ref{th:TRANSD:sat}. Now let $k>0$. 
We make use of the operations $\uparrow$ and $\downarrow$ between a transducer \trs and an automaton \auta, \cite{Kon:2002,Kon:2017}: 
\begin{itemize}
    \setlength{\itemsep}{0pt}%
    \setlength{\parskip}{0pt}%
    \vspace{-0.5\topsep}
	\item $\trs\downarrow\auta$ is the transducer whose domain is $L(\auta)$ such that $v\in(\trs\downarrow\auta)(u)$ iff $v\in\trs(u)$ and $u\in L(\auta)$. The transducer $\trs\downarrow\auta$ can be computed in time $O(|\trt||\auta|)$.
	\item $\trs\uparrow\auta$ is the transducer such that $v\in(\trs\uparrow\auta)(u)$ iff $v\in\trs(u)$ and $v\in L(\auta)$. The transducer $\trs\uparrow\auta$ can be computed in time $O(|\trs||\auta|)$.
\end{itemize}
Note that $v\in(\trs\downarrow\auta\uparrow\auta)(u)$ iff $u,v\in L(\auta)$ and $v\in\trs(u)$.
We also use the following result of~\cite{GurIba:1983}: \emph{for fixed integer $k>0$, there is a polynomial time
algorithm to determine whether an arbitrary transducer is $k$-valued;}
where a transducer {\trs is $k$ valued}, if $|\trs(w)|\le k$, for all words $w$.
We would like to have that $L(\auta)$ is in \indep{\trt,k} iff the transducer $(\trt\downarrow\auta\uparrow\auta)$ is $(k+1)$-valued.
However, this is incorrect. We fix this as follows: Let $\trt^{=}$ be the transducer resulting from \trt if we add a new state $q$ that is both initial and final and has the loop transitions $(q,\sigma/\sigma,q)$ for all alphabet symbols $\sigma\in\Sigma$. Then $\tr^{=}(w)=\{w\}\cup\trt(w)$, for all words $w$. 
Let $\trs=(\trt^{=}\downarrow\auta\uparrow\auta)$. 
One verifies that
\begin{itemize}
    \setlength{\itemsep}{0pt}%
    \setlength{\parskip}{0pt}%
    \vspace{-0.5\topsep}
    \item $\indepP_{\trt,k}=\indepP_{\trt^{=},k}$.
	\item $\trs(u)=\trt^{=}(u)\cap L(\auta)$, for all $u\in L(\auta)$.
	\item $L(\auta)\in \indep{\trt,k}$ iff the transducer $\trs$ is $(k+1)$-valued.
\end{itemize}
The statement follows form the above mentioned result of \cite{GurIba:1983}.
\end{proof}

\pnsi
We note that, in \cite[Prop.~35]{IbaMcQu:2023}, the authors show that, for fixed $p,r,c,k$ there is a polynomial algorithm to decide whether a given ``2-way machine having $p$ counters with at most $r$ reversals and at most $c$ crossings on the input tape'' accepts a $k$-infix-free language (also,  $k$-prefix-free, $k$-suffix-free).
This implies that, for fixed $k$, the problem is decidable in polynomial time when the given machine is a standard automaton. 
In the above theorem we showed that the problem remains decidable in polynomial time for any marginal property in~$\indepP_{\trt,k}$.
\pnsi
In the above theorem, if $k$ is not fixed then the problem is decidable, but it is unlikely that this is so in polynomial time---as stated in Section~\ref{sec:dec}, already the problem is PSPACE-complete for the fixed property $\indepP_{\tsx,k}$.  
\pnsi
We now extend the definitions of the finitely-marginal properties of \cite{KoHanSalomaa:2021}, as we did in~\eqref{eq:TRANSD:margin} for the $k$-marginal properties:
\begin{equation}\label{eq:TRANSD:fin:margin}
L\in \indep{\trt,{\rm fin}}, \quad\text{if }\;
\trt(u)\cap L\; \text{ is finite,\, for all $u\in L$}.
\end{equation}
Again, \trt is any transducer (e.g., if $\trt=\tsx$ then \indep{\tsx,{\rm fin}} is the class of finitely-suffix-free languages).
As stated in Section~\ref{sec:dec}, the authors of \cite{KoHanSalomaa:2021} showed that the satisfaction problem for the  three marginal properties finitely prefix-free, finitely suffix-free and finitely infix-free is decidable in quadratic time for regular languages.
We show below that the \emph{uniform} satisfaction problem is also decidable in quadratic time w.r.t. the size of the given automaton.
The proof unifies the ideas in the proofs in~\cite{KoHanSalomaa:2021} in a way that works for all transducers \trt. 
For a transducer \trs, the proof uses the following terms: 
\begin{itemize}
    \setlength{\itemsep}{0pt}%
    \setlength{\parskip}{0pt}%
    \vspace{-0.5\topsep}
    \item \emdef{\ew-input path}: any path in \trs whose transition labels are of the form $\ew/b$ for $b\in\al\cup\{\ew\}$.
    \item \emdef{proper \ew-input cycle}: any cycle in \trs whose transition labels are of the form $\ew/b$, for $b\in\al\cup\{\ew\}$, and at least one of these labels has $b\not=\ew$.
    \item \emdef{trim part of \trs}: the transducer resulting if we eliminate the states of \trs that are not reachable from an initial state and cannot reach a final state.
\end{itemize}

\begin{theorem}\label{th:gen:sat:finite:margin}
Assuming a fixed (but arbitrary) alphabet, there is a $O(|\trt||\auta|^2)$ time algorithm to decide whether $L(\auta)\in\indepP_{\trt,{\rm fin}}$, for given transducer \trt and automaton \auta.
\end{theorem}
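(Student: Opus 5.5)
The plan is to reduce the question to a purely structural property of a single product transducer, following the same pattern as the proof of Theorem~\ref{th:gen:sat:margin}. Let $\trs=\trt\downarrow\auta\uparrow\auta$. It can be computed in time $O(|\trt||\auta|^2)$ by applying $\downarrow$ and then $\uparrow$, each costing $O(|\trt||\auta|)$ relative to the current transducer, and its size is also $O(|\trt||\auta|^2)$. By construction, $\trs(u)=\trt(u)\cap L(\auta)$ for every $u\in L(\auta)$, and $\trs(u)=\emptyset$ otherwise. Hence, by~\eqref{eq:TRANSD:fin:margin}, $L(\auta)\in\indepP_{\trt,{\rm fin}}$ if and only if $\trs(u)$ is finite for every word $u$. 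Unlike in Theorem~\ref{th:gen:sat:margin}, no modification $\trt^{=}$ is needed, since adding or removing the single word $u$ does not affect finiteness.

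The key lemma is the following: \emph{there is a word $u$ with $\trs(u)$ infinite iff the trim part of \trs contains a proper \ew-input cycle.}

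For the \emph{if} direction, take such a cycle through a useful state $p$. Fix an accepting path that passes through $p$, with input label $u$. Inserting the cycle $m$ times at $p$ leaves the input $u$ unchanged, while the output length strictly increases with $m$. This gives infinitely many outputs for $u$.

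For the \emph{only if} direction, I would argue by pigeonhole. Let $n$ be the number of states of \trs. Since $\trs(u)$ is infinite, it contains some output $v$ with $|v|>(|u|+1)\,n$. Among the accepting paths labelled $(u,v)$, choose one with the fewest transitions. Any cycle in this path whose transitions all carry $\ew/\ew$ could be deleted without changing the label, so the minimal path contains no such cycle.

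The path has at most $|u|$ transitions with nonempty input. These split it into at most $|u|+1$ maximal \ew-input segments, so some segment produces more than $n$ output symbols. That segment therefore has more than $n$ transitions with nonempty output, and so it revisits a state. The cycle between the two visits is a \ew-input cycle, and it cannot consist only of $\ew/\ew$ transitions by minimality. It is therefore a proper \ew-input cycle, and it lies in the trim part because it sits on an accepting path.

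I expect the main obstacle to be this pigeonhole step, specifically the bookkeeping needed to guarantee that the repeated-state cycle has nonempty output. To make this clean, I would choose the two repeated states among the positions immediately preceding output-producing transitions inside the segment. Since there are more than $n$ such positions, two of them coincide, and the cycle between them automatically contains at least one transition with nonempty output. This avoids relying on minimality at all.

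The algorithm is then as follows. First compute the trim part of \trs by forward and backward reachability. Next, restrict it to its \ew-input transitions and compute the strongly connected components of the resulting graph. Finally, answer ``no'' exactly when some transition $\ew/b$ with $b\neq\ew$ has both endpoints in the same strongly connected component. Each step is linear in $|\trs|$, so the total running time is $O(|\trt||\auta|^2)$, with the alphabet size treated as a constant.
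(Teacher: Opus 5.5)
Your proposal is essentially the paper's proof: the same transducer $\trs=\trt\downarrow\auta\uparrow\auta$ and the same criterion (a proper $\ew$-input cycle in its trim part). You also use pumping for one direction and a pigeonhole on long $\ew$-input segments for the other, and your strongly-connected-component check is a clean linear-time way to detect such a cycle.

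One small arithmetic fix: the $|u|$ input-consuming transitions may themselves emit output symbols (labels $a/b$), so you need $|v|>(|u|+1)n+|u|$ rather than $|v|>(|u|+1)n$ to force a segment with more than $n$ output symbols. This costs nothing, since $\trs(u)$ contains arbitrarily long words.
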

\begin{proof}
Let $\Sigma$ be the alphabet and let $\Sigma_{\ew}=\Sigma\cup\{\ew\}$.
We claim that the condition to test in statement~\eqref{eq:TRANSD:fin:margin} is equivalent to 
\[
\trs \; \text{ has no proper $\ew$-input cycle, where } \trs= \text{ the trim part of } \trt\downarrow\auta\uparrow\auta.
\]
See the proof of Theorem~\ref{th:gen:sat:margin} for the operations `$\downarrow$' and `$\uparrow$'. See further below for the correctness of the above claim. The time complexity follows when we note that 
(i) \trs is of size $O(|\trt||\auta|^2)$ and can be constructed in time $O(|\trt||\auta|^2)$; 
(ii) one can test in linear time whether a graph has a cycle---using breadth first search;
(iii) the graph to test for a cycle is the transducer that results when we treat \trs as an automaton (viewing the transition labels as atomic symbols) and intersect it with any fixed  automaton accepting the language $(\{\ew\}\times\Sigma_{\ew})^*(\{\ew\}\times\Sigma)(\{\ew\}\times\Sigma_{\ew})^*$.
\pnsi
We now establish the correctness of the claim about the transducer \trs. 
First assume that \trs has a proper \ew-input cycle with some label $(\ew,z)\not=(\ew,\ew)$, starting from some state $q$ and ending in $q$. As \trs is trim, there is a path from an initial state to $q$ with some label $(x,y)$, and a path from $q$ to a final state with some label $(x',y')$. 
Then $\trs(xx')$ is infinite, as it contains $yz^*y'$. Thus, for $u=xx'$, we have that $u\in L(\auta)$ and $\trt(u)\cap L(\auta)=\trs(u)$ is infinite and, therefore, $L(\auta)\notin\indepP_{\trt,{\rm fin}}$.
\pnsi
Now assume that $L(\auta)\notin\indepP_{\trt,{\rm fin}}$; then there is $u\in L(\auta)$ such that $\trs(u)$ is infinite. 
Consider the set $\pi_u$ of all accepting paths in \trs that contain \emph{no} $(\ew,\ew)$-cycles and have a label of the form $(u,v)$.
As $\trs(u)$ is infinite, the set $\pi_u$ is infinite.
As the input part $u$ of the labels is the same in all the $\pi_u$ paths, there are arbitrarily long \ew-input subpaths contained in the paths of $\pi_u$.
So at least one of those \ew-input subpaths has a repeated state and, therefore, the subpath has a cycle, and this cycle is not a $(\ew,\ew)$-cycle; hence it is a proper \ew-input cycle.
\end{proof}


\section{Concluding Remarks}\label{sec:last}
The concepts of $k$-prefix-free, $k$-suffix-free, and $k$-infix-free languages, and the associated results in \cite{KoHanSalomaa:2021} provide important insights that lead to a formal method for expressing the marginal properties $\indepP_{\trt,k}$ and studying the associated uniform satisfaction and maximality problems.
In particular, we have the following lines of research.
\begin{itemize}
    \setlength{\itemsep}{0pt}%
    \setlength{\parskip}{0pt}%
    \vspace{-0.5\topsep}
	\item 
	The condition to test for deciding whether a  language $L\in\indep{\trt}$ is maximal, where \trt is a transducer, \cite{DudKon:2012}, is
	  \[
	  L\cup\trt(L)\cup\trt^{-1}(L)=\alstar.
	  \]
	  Is there a similar condition to test for the \indep{\trt,k}-maximality of a  language $L$?
	\item 
	The PSPACE-completeness result in Section~\ref{sec:dec} about the satisfaction of the $k$-prefix-free property implies that the \emph{uniform} satisfaction problem of whether a regular language is in $\indepP_{\trt,k}$, for given \trt and $k$, is PSPACE-hard.
	Is the uniform satisfaction problem in PSPACE? If not, for which types of transducers \trt is the satisfaction problem in PSPACE---see again the discussion that follows Theorem~\ref{th:gen:sat:margin}?
\end{itemize}


\begin{credits}
\subsubsection{\ackname} This study was funded by a Discovery Grant of NSERC Canada (grant number RGPIN-2020-05996).

\subsubsection{\discintname}
The author has no competing interests to declare that are
relevant to the content of this article. 
\end{credits}

\if\MINE1\bibliographystyle{plain}
\else \bibliographystyle{splncs04}  
\fi
\bibliography{refs}

\end{document}